\newtheorem{definition}{Definition}[section]
\newtheorem{proposition}[definition]{Proposition}
\newtheorem{example}[definition]{Example}
\newtheorem{remark}[definition]{Remark}
\newcommand{\RR}{\mathbb{R}}
\newcommand{\NN}{\mathbb{N}}
\newcommand{\BB}{\mathbb{B}}
\DeclareMathOperator*{\argmax}{arg\,max}
\newcommand{\Img}{\operatorname{Im}}
\newcommand{\fix}{\operatorname{fix}}
\newcommand{\power}[1]{{\mathcal P}(#1)}
\title{Higher-Order Decision Theory\footnote{\scriptsize We thank the seminar participants at the University of Mannheim, the Dagstuhl Perspectives Workshop ``Categorical Methods at the Crossroads'', the Dagstuhl Seminar ``Coalgebraic Semantics of Reflexive Economics'', the ``Computing in Economics and Finance Conference 2014'' in Oslo, the ``Cogrow'' Workshop in Nijmegen 2014 and the ``Logics for Social Behavior'' in Den Haag 2014 for helpful comments. Hedges thanks EPSRC, grant EP/K50290X/1, for financial support. Oliva gratefully acknowledegs financial support by the Royal Society through grant 516002.K501/RH/kk. Winschel and Zahn gratefully acknowledge financial support by the Deutsche Forschungsgemeinschaft (DFG) through SFB 884 ``Political Economy of Reforms''.}}
\author{
Jules Hedges, Paulo Oliva\\ 
\footnotesize School of Electronic Engineering and Computer Science, Queen Mary University London
\vspace{0.01cm}\\
Evguenia Sprits, Philipp Zahn\\ 
\footnotesize Department of Economics, University of Mannheim
\vspace{0.01cm}\\
Viktor Winschel\\ 
\footnotesize Department of Management, Technology and Economics, ETH Z\"urich
}
\date{\today}
\begin{document}
\maketitle
\begin{abstract}
\footnotesize
Classical decision theory models behaviour in terms of utility maximisation where utilities represent rational preference relations over outcomes.
However, empirical evidence and theoretical considerations suggest that we need to go beyond this framework.
We propose to represent goals by higher-order functions or operators that take other functions as arguments where the max and $\argmax$ operators are special cases.
Our higher-order functions take a context function as their argument where a context represents a process from actions to outcomes.
By that we can define goals being dependent on the actions and the process in addition to outcomes only.
This formulation generalises outcome based preferences to context-dependent goals.
We  show how to uniformly represent within our higher-order framework
classical utility maximisation but also various other extensions that have been debated in economics.
\end{abstract}
\textbf{JEL codes:} B4, C02, D01, D03\\
\textbf{Keywords:}  behavioural economics, foundations of decision theory, Keynes' beauty contest, process orientation, higher order functions, quantifiers, selection functions

\section{Introduction}

Rational choice theory provides an elegant and succinct framework in order to model individual choices. The representation of agents' goals and their choices as utility maximisation has created a rich theory that is flexible and has vast applications. Despite its importance, rational choice theory does have its limitations. These limitations are conceptual as well as empirical and they are well documented in the literature.

The standard mathematical representation of goals as rational preference relations\footnote{Typically, a preference relation is defined to be \emph{rational}, if it is total and transitive, see \cite{Mas-Colell1995,Rubinstein2006,kreps2012microeconomic}} or alternatively as utility functions, can be too restrictive -- useful for many applications but not for all. Several issues have been identified. 
Take, for instance, the standard assumption that individuals' preference relations are complete. There has been a long debate as to whether this is a reasonable requirement, and several solutions have been proposed \cite{Aumann1962, Dubra_et_al2004_comple, Ok2002, Richter1971, Sen1997}.

Furthermore, there are numerous examples of decision procedures where the agents either do not not fully maximise -- like the satisficing behaviour of Herbert Simon -- or
they may adhere to various decision heuristics which violate for example the assumption of independence of irrelevant alternatives, like menu dependent  
or second best decision procedures \cite{Kalai2002,Sen1997}. 

A fundamental, often implicit, assumption is that people only care about the final outcomes of their decisions. Conceptually, it is not obvious why rational decision 	makers should be concerned with consequences of their actions only, without considering the action itself \cite{Sen1997}. 

Therefore, given the conceptual and empirical deficiencies, there are good reasons to not limit decision theory to the rational choice paradigm. 
Many of the proposed solutions, however, only attempt to represent seemingly `non-rational' behaviours as rational ones. This includes the creation of an extended outcome space, where aspects of the decision process are explicitly represented as outcomes \cite{Sen1997}. Alternatively, several papers propose to rationalise choice by multiple rationales \cite{Kalai2002} or to represent the preferences by multivalued utility functions \cite{Ok2002,Dubra_et_al2004_comple}. In all cases, the decision problem is usually artificially manipulated by the economist in order to represent the agent ``as if" he would decide given some rational preference relation or maximise a utility function, even if it is clear from the description of the decision process that actually nothing like that takes place. 

In this paper we provide a new approach based on higher-order functions
that unifies the behavioural patterns mentioned above with the classical approach based on  rational behaviour and utility maximisation. 
A \emph{higher-order function} (or \emph{functional}) is a function whose domain is itself a set of functions.
This paper relies on a framework that has been developed in computer science \cite{escardo10a,escardo_sequential_2011} as a game theoretical approach to proof theory\footnote{Proof theory is a branch of mathematical logic which investigates the structure and meaning of formal mathematical proofs. It has been recently discovered that certain proofs of high logical complexity can be interpreted as computer programs which compute equilibria of suitable generalised games.}. 
We extend and apply this approach to decision theory.\footnote{In the companion paper \cite{Hedges_et_al_2015_games}, 
we extend the higher-order framework to game theory.} 

The core concept is that we model agents' goals as \emph{quantifiers}, 
i.e. higher-order functions of type $(X \to R)\to R$, where $X\to R\,$ is a space of functions from the set of choices $X$ to the set of possible outcomes $R$. Quantifiers describe which outcomes 
an agent considers to be good.  A corresponding notion is that of a \emph{selection function}, i.e. a higher-order function of type $(X \to R) \to X$ which calculates a choice that meets the desired goal. 
We take functions of type $X \to R$ to represent a \emph{decision context} such that formulating goals as quantifiers boils down to describing the preferred outcomes for any given decision context. 
The decision of an agent described by a quantifier takes place in a context
and by that the agent can take into account the process from actions to outcomes.

Since the $\max \colon (X \to \RR) \to \RR$ operator is a quantifier and  the corresponding $\argmax \colon (X \to \RR) \to X$ operator is a selection function,
we can show how to instantiate our higher-order approach with these operators
such that the usual utility functions and preference relations are instances of our modelling framework.

We also show how our framework captures alternatives to the usual rationality assumption.  More specifically, our framework addresses the relevance of the choice act itself since quantifiers and selection functions take the decision context as input such that context-dependent goals can be seamlessly modelled. 
The outcome space in our formulation of goals can have any arbitrary structure and is not restricted in order to be representable by utility functions (or, equivalently, rational preferences). In particular, we can model preference relations that are incomplete.
Moreover, our framework allows to model arbitrary heuristics (e.g. second best choice, median procedure, etc.) directly, without the need for a multivariable representation. 
Finally, we show that quantifiers themselves can be used to model behaviour directly at the level of higher-order functions. 
We will show how to represent the abstract concerns of coordination and differentiation as fixpoint operators that are higher-order functions as well.
In other words, our framework is able to represent existing models but it can also be used to formulate new ones. 

We do not aim to provide an answer to the question of which paradigm should be used in order to model a particular phenomenon. Instead, we introduce our framework such that different approaches can be captured as instances of the same abstraction. Our contribution is to provide a  powerful organising framework and a very expressive mathematical language. 
In this framework we can clarify the relation between different models, highlight their commonalities and differences and allow for their combinations.

The paper is organized as follows: We introduce and explain higher-order functions in the next section. Then, we instantiate rational preferences as well as utility maximisation as special cases in Section \ref{sec:decisions}. In Sections \ref{sec:beyond} and \ref{sec:reflexive} we introduce a series of deviations from classical choice theory and show how they can be 
represented by higher-order functions.  We conclude in Section \ref{sec:conclusions}.

\section{Agents as Quantifiers}
\label{sec:quantifiers}

A \emph{higher-order function} (or \emph{functional}) is a function whose domain is itself a set of functions. Given sets $X$ and $Y$ we denote by $X \to Y$ the set of all functions with domain $X$ and codomain $Y$. A higher-order function is therefore a function $f : (X \to Y) \to Z$ where $X$, $Y$ and $Z$ are sets.

There are examples of higher-order functions that are well familiar to economists. 
In case of the maximisation of a utility function $u \colon X \to \RR$
\[ \max_{x \in X} u(x) \]
the max operator takes the utility function $u \colon X \to \RR$ as its input and returns a real number $\max_{x \in X} u(x)$ as the output.
Thereore, the $\max$ operator has type
\[ \max \colon (X \to \RR) \to \RR \]
In a similar vein, the $\argmax$ operator is also a higher-order function of a particular type: 
\[ \argmax \colon (X \to \RR) \to \power{X} \]
where $\power{X}$ is the set of subsets of $X$. 
For a given function $u \colon X \to \RR$ we have that $\argmax(u)$ is the set of points where $u$ attains its maximum value. 

\subsection {Agent Context}

We want to model an agent $\mathcal A$ in an economic \emph{situation} or \emph{context} and formulate his motivations and his choices. We shall model such
contexts as mappings $X \to R$ that encode for choices in $X$ their effects on the outcomes in $R$.

\begin{definition}[Agent context] 
We call any function $p : X \to R$ a possible \emph{context} for the agent $\mathcal A$ who
is choosing a move from a set $X$, having in sight a final outcome in a set $R$, 
\end{definition}

For instance, $X$ could be the set of available flights between two cities, and $R = \RR^+$ could be the set of positive real numbers that represent prices. An agent who is interested in choosing a flight having in mind only the cost of the flight will consider the price list $X \to \RR^+$ as a sufficient context for his decision. If, however, the number of stops (or changes) is an important factor in the decision of the agent, we could take $R = \RR^+ \times \NN$ and the agent's context would then be $X \to \RR^+ \times \NN$. 

\subsection {Quantifiers}
\label{sec:quantifier_preference_util}

Suppose the agent $\mathcal A$ has to make a decision in the context $p \colon X \to R$. The agent will consider some of the possible outcomes to be \emph{good} (or \emph{acceptable}), and others to be bad (or \emph{unacceptable}). Such choices define a higher-order function of the following type:

\begin{definition}[Quantifier, \cite{escardo10a,escardo_sequential_2011}] Mappings $$\varphi : (X \to R) \to \power{R}$$ from contexts $p : X \to R$ to sets of outcomes $\varphi(p) \subseteq R$ are called \emph{quantifiers}.\footnote{The terminology comes from the observation that the usual existential $\exists$ and universal $\forall$ quantifiers of logic can be seen as operations of type $(X \to \BB) \to \BB$, where $\BB$ is the type of booleans. Mostowski \cite{Mostowski_1957} has called arbitrary functionals of type $(X \to \BB) \to \BB$ \emph{generalised quantifiers}. This was generalised further  in \cite{escardo10a} to the type given here.}
\end{definition}

We model agents $\mathcal A$ as quantifiers $\varphi_{\mathcal A}$ and take $\varphi_{\mathcal A}(p)$ as the set of outcomes that the agent $\mathcal A$ considers preferable in each context $p \colon X \to R$. Our main objective in this paper is to convince the reader that this is a general, modular, and highly flexible way of describing an agent's goal or objective.

The classical example of a quantifier is utility maximisation. Suppose an agent has a utility function $u'  \colon R  \to \RR$ mapping outcomes into utilities. Composing the context $p \colon X \to R$ and $u' \colon R \to \RR$ we get a new context that maps actions directly into utility $u \colon X \to \RR$. Given this new context, the good outcomes for the player are precisely those for which his utility function is maximal. This quantifier is given by
\[ \max (u)= \{ r \in \Img(u)\mid r \geq u(x') \text{ for all } x' \in X \} \]
where $\Img(u)$ denotes the image of the utility function $u \colon X \to R$. 

\subsection {Context-dependence}
\label{sec:context-dependence}

In general, we are going to allow the set of outcomes that the agent considers good to be totally arbitrary. It is reasonable, however, to assume that for each context $p \colon X \to R$ we have $\varphi(p) \neq \varnothing$. This is to say that in any given context the agent must have a preferred outcome (even if this would be the least bad one). We will call such quantifiers \emph{total}. Another more interesting class of quantifiers consists of those we call \emph{context-independent}:

\begin{definition}[Context-independence] A quantifier $\varphi \colon (X \to R) \to \power{R}$ is said to be \emph{context-independent} if the value $\varphi(p)$ only dependents on $\Img(p)$, i.e.
\[ \Img(p) = \Img(p') \implies \varphi(p) = \varphi(p'). \]
Dually, a quantifier $\varphi$ will be called \emph{context-dependent} if for some contexts $p$ and $p'$, with $\Img(p) = \Img(p')$, the sets of preferred outcomes $\varphi(p)$ and  $\varphi(p')$ are different.
\end{definition} 

Intuitively, a context-dependent quantifier will select good outcomes not only based on which outcomes are possible, but will also take into account how the outcomes are actually achieved. It is easy to see that the quantifier $\max$ is context-independent, since it can be written as a function of $\Img (p)$ only.

Our prototypical example of a context-dependent quantifier is the fixpoint operator
\[ \fix : (X \to X) \to \mathcal P (X) \]
Recall that a fixpoint of a function $f : X \to X$ is a point $x \in X$ satisfying $f(x) = x$. If the set of moves is equal to the set of outcomes then there is a quantifier whose good outcomes are precisely the fixpoints of the context. If the context has no fixpoints we shall assume that the agent will be equally satisfied with any outcome. 
Such a quantifier is given by
\[ \fix (p)= \begin{cases}
\{ x \in X \mid p(x) = x \} &\text{ if nonempty } \\
X &\text{ otherwise}.
\end{cases} \]
Clearly $\fix(\cdot)$ is context-dependent, since we could have different contexts $p, p' \colon X \to X$ having the same image set $\Img(p) = \Img(p')$ but with $p$ and $p'$ having different sets of fixpoints. For example, if we take $p, p' : \RR \to \RR$ to be given by $p (x) = x$ and $p' (x) = -x$ then $\Img (p) = \Img (p') = \RR$, but $\fix (p) = \RR$ and $\fix (p') = \{ 0 \}$. We will discuss the economic relevance of this particular quantifier in Section \ref{sec:reflexive} where we discuss reflexive agents.

\subsection{Attainability}

Another important property of quantifiers that we shall consider is that of \emph{attainability}:

\begin{definition}[Attainability] \label{def-attain} A quantifier $\varphi : (X \to R) \to \mathcal P (R)$ is called attainable if, for every context $p : X \to R$, for some $r \in \varphi(p)$ there exists an $x$ such that $p(x) = r$. (In particular, attainable quantifiers are total.)
\end{definition}

In other words, an agent modelled by an attainable quantifier will select at least one preferred outcome $r$ that is actually \emph{achievable} by some move $x$. An equivalent definition is that $\varphi \colon (X \to R) \to \power{R}$ is attainable if and only if
\[ \varphi(p) \cap \Img(p) \neq \emptyset. \]

\begin{remark} We could also define a \emph{strong attainability} notion whereby all $r \in \varphi(p)$ need to be achievable by some $x \in X$, i.e.
\[ \varphi(p) \subseteq \Img(p). \]
For our purposes the weaker notion of Definition \ref{def-attain} has been sufficient and reasonably well-behaved.
\end{remark}

Attainable quantifiers bring out the relevance of moves in the decision making process. Sometimes an agent might actually wish to spell out the preferred \emph{moves} instead of the preferred \emph{outcomes}. This leads to the definition of another class of higher-order functions:

\begin{definition}[Selection functions] A \emph{selection function}\footnote{In the computer science literature where selection functions have been considered previously \cite{escardo10a,escardo_sequential_2011} the focus was on single-valued ones. However, as multi-valued selection functions are extremely important in our examples we have adapted the definitions accordingly.} is any function of type
\[ \varepsilon : (X \to R) \to \mathcal P (X) \]
\end{definition}

Similarly to quantifiers, the canonical example of a selection function is miximising $\RR$, defined by
\[ \argmax(p)= \{ x \in X \mid p(x) \geq p(x') \text{ for all } x' \in X \} \]
The $\argmax$ selection function is naturally multi-valued: a function may attain its maximum value at several different points.

\begin{proposition} A quantifier $\varphi \colon (X \to R) \to \power{R}$ is \emph{attainable} if and only if there exists a total selection function $\varepsilon \colon (X \to R) \to \power{X}$ such that, for all $p \colon X \to R$, 
\[ x \in \varepsilon (p) \implies p (x) \in \varphi (p)\]
\end{proposition}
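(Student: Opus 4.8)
The plan is to prove the two implications separately, leaning on the equivalent characterisation of attainability stated just above, namely that $\varphi$ is attainable if and only if $\varphi(p) \cap \Img(p) \neq \varnothing$ for every context $p \colon X \to R$. Throughout I recall that a selection function is \emph{total} exactly when $\varepsilon(p) \neq \varnothing$ for every $p$, in analogy with total quantifiers.

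For the direction from attainability to the existence of $\varepsilon$, I would define the selection function explicitly by
\[ \varepsilon(p) = \{ x \in X \mid p(x) \in \varphi(p) \}. \]
With this definition the required implication $x \in \varepsilon(p) \implies p(x) \in \varphi(p)$ holds by construction. The only thing left to verify is that $\varepsilon$ is total, i.e.\ that $\varepsilon(p) \neq \varnothing$ for every $p$; but $\varepsilon(p)$ is nonempty precisely when there is some $x$ with $p(x) \in \varphi(p)$, which is exactly the statement that $\varphi(p) \cap \Img(p) \neq \varnothing$, i.e.\ attainability. Here I would stress that allowing \emph{multi-valued} selection functions is what keeps this construction clean: I can take the entire set of witnessing moves rather than invoking a choice principle to single one out.

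For the converse, suppose such a total $\varepsilon$ exists and fix an arbitrary context $p$. Since $\varepsilon$ is total, $\varepsilon(p) \neq \varnothing$, so I may pick some $x \in \varepsilon(p)$. The hypothesised implication then gives $p(x) \in \varphi(p)$, while trivially $p(x) \in \Img(p)$. Hence $p(x) \in \varphi(p) \cap \Img(p)$, so this intersection is nonempty; as $p$ was arbitrary, $\varphi$ is attainable.

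I expect no serious obstacle, since both directions amount to unwinding the definitions. The one conceptual point worth flagging is the interplay between the implication and totality: the implication alone is vacuously satisfied by the empty-valued $\varepsilon$, so it is the \emph{totality} requirement that carries the content and forces a genuinely achievable preferred outcome to exist. Being explicit that $p(x) \in \Img(p)$ always holds, and keeping the set-membership manipulations in the right order, is all that the argument really needs.
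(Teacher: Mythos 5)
Your proof is correct and complete: the canonical witness $\varepsilon(p) = \{x \in X \mid p(x) \in \varphi(p)\}$, together with the observation that totality of this $\varepsilon$ is precisely the condition $\varphi(p) \cap \Img(p) \neq \varnothing$, settles one direction, and picking any element of the nonempty set $\varepsilon(p)$ settles the other. The paper in fact states this proposition without any proof, so there is nothing to diverge from; your argument is the evident one the authors intended, and your closing remark — that the implication alone is vacuously satisfied by the empty-valued selection function, so totality is what carries the content — is exactly the right conceptual point to flag.
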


If such a relationship between a quantifier $\varphi : (X \to R) \to \mathcal P (R)$ and a selection function $\varepsilon : (X \to R) \to \mathcal P (X)$ holds 
then we shall say that $\varepsilon$ \emph{attains} $\varphi$. The attainability relation holds between the quantifier $\max$ and the selection function $\argmax$. The fixpoint quantifier is also a selection function, and it attains itself since
\[ x \in \fix (p)\implies p (x) \in \fix(p). \]

\section{Utility Maximisation and Preference Relations}
\label{sec:decisions}

In this section we relate the concepts of quantifiers and selection functions to the standard concepts of classical decision theory: utility functions and preference relations. In particular, we show that both correspond to context-independent quantifiers that have the same structure. 
We now want to characterise the relationship between preference relations and context-independent quantifiers.

\subsection{Preference Relations and Context-Independent Quantifiers}

Suppose $R$ is the set of possible outcomes, and an agent has a partial order relation $\succeq$ on $R$ as preferences, so that $x \succeq y$ means that the agent prefers the outcome $x$ to $y$. These partial orders lead to choice functions $f : \mathcal P (R) \to \mathcal P (R)$ where $f(S)$ are the maximal elements in the set of possible outcomes $S$ with respect to the order $\succeq$. Note that these $f$ satisfy $f (S) \subseteq S$, and $f (S) \neq \varnothing$ for non-empty $S$.

Every such $f$ can be turned into a quantifier $\varphi$ in a generic way, using the fact that the image operator is a higher-order function $\Img : (X \to R) \to \mathcal P (R)$:
\[ (X \to R) \xrightarrow{\Img} \mathcal P (R) \xrightarrow{f} \mathcal P (R) \]
so that $f \circ \Img \colon (X \to R) \to \mathcal P (R)$ are quantifiers. 

\begin{proposition} Assume $|X| \geq |R|$, such that the number of choices is bigger than the number of possible outcomes. Then a quantifier $\varphi \colon (X \to R) \to \power{R}$ is context-independent if and only if $\varphi = f \circ \Img$, for some choice function $f \colon \power{R} \to \power{R}$.
\end{proposition}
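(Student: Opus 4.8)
The plan is to prove the two implications separately, noting in advance that the cardinality hypothesis is needed only for the forward direction. For the easy direction, suppose $\varphi = f \circ \Img$ for some $f \colon \power{R} \to \power{R}$. If $p, p' \colon X \to R$ satisfy $\Img(p) = \Img(p')$, then $\varphi(p) = f(\Img(p)) = f(\Img(p')) = \varphi(p')$, so $\varphi$ is context-independent; this uses no assumption on $|X|$ or $|R|$.

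For the forward direction, assume $\varphi$ is context-independent and construct an $f$ with $f \circ \Img = \varphi$. The idea is to read $f$ off from $\varphi$ along a section of the image operator: for each subset $S \subseteq R$ that arises as $\Img(p)$ for some context $p$, fix one such representative $p_S$ and set $f(S) := \varphi(p_S)$; for subsets $S$ that never occur as an image, define $f(S)$ arbitrarily (the value will be irrelevant). To verify $\varphi = f \circ \Img$, take any $p$ and put $S = \Img(p)$; then $f(\Img(p)) = \varphi(p_S)$, and since $\Img(p) = S = \Img(p_S)$, context-independence yields $\varphi(p_S) = \varphi(p)$, as required. Note that the constructed $f$ need not satisfy $f(S) \subseteq S$, so here ``choice function'' must be read in the loose sense of an arbitrary map $\power{R} \to \power{R}$.

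The only delicate point, and where the hypothesis $|X| \geq |R|$ enters, is ensuring that the representatives $p_S$ exist for every $S$ on which $f$ must genuinely be pinned down, i.e.\ for every $S$ lying in the image of the operator $\Img$. A nonempty set $S$ equals $\Img(p)$ for some $p \colon X \to R$ exactly when there is a surjection $X \twoheadrightarrow S$, which exists precisely when $|X| \geq |S|$. Since $|S| \leq |R| \leq |X|$ for every $S \subseteq R$, every nonempty subset of $R$ is realised as the image of some context, so $f$ is determined wherever it matters; the sole exception is $S = \varnothing$, which (when $X \neq \varnothing$) is never an image and may therefore be assigned any value. I expect this realisability step to be the main obstacle: the cardinality bound is exactly what lets each potential image $S \subseteq R$ be attained by some context, so that $f$ is forced on all of $\power{R}$ apart from $\varnothing$ and the factorisation through $\Img$ is canonical.
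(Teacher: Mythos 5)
Your proof is correct and follows essentially the same route as the paper's: for each subset $S \subseteq R$ pick a representative context with image $S$ (the paper calls it $u_S$, you call it $p_S$), define $f(S) = \varphi$ of that representative, and use context-independence to verify $\varphi = f \circ \Img$. You are in fact slightly more careful than the paper, which asserts that \emph{every} $S \subseteq R$ arises as an image (false for $S = \varnothing$ when $X \neq \varnothing$, an edge case you handle explicitly) and which, like you, implicitly reads ``choice function'' as an arbitrary map $\power{R} \to \power{R}$ rather than one satisfying $f(S) \subseteq S$.
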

\begin{proof} If $\varphi = f \circ \Img$ then $\varphi$ is context-independent. For the other direction, note that since $|X| \geq |R|$ we have for any subset $S \subseteq R$ a map $u_S \colon X \to R$ such that $\Img(u_S) = S$. Assume $\varphi$ is context-independent and let us define $f(S) = \varphi(u_S)$. Clearly,
\[ \varphi(p) = \varphi(u_{\Img(p)}) = f(\Img(p)) \]
where the first step uses that $\varphi$ is context-independent and that $\Img(p) = \Img(u_{\Img(p)})$ by the assumption on the family of maps $u_S$, while the second steps simply uses the definition of $f$.
\end{proof}

Agents who are defined by context-independent quantifiers are choosing the set of good outcomes simply by ranking the set of outcomes that can be achieved in a given context but are ignoring all the information about how each of the outcomes arise from particular choices of moves. 

For instance, we might have a set of actions that will lead us to earn some large sums of money. Some of these, however, might be illicit. A classical agent who cares only about the direct consequences of his decision and is defined in a context-independent way would choose the outcome that gives himself the maximum sum of money, regardless of the nature of action. If however the agent also cares about the actions themselves and their indirect consequences, he might not consider the largest amount of money as preferable. As outlined in the introduction a standard remedy in order to include such implicit concerns is to extend the outcome space. This can be a necessary correction by the analyst if the initial outcome space was truly misconceived. Encoding procedural concerns by redefining the outcome space may, however, come at a cost. We will come back to this methodological issue in Section \ref{sec:beyond}. 

We close this subsection with the following proposition which guarantees the attainability of context independent quantifiers arising from preference relations: 

\begin{proposition} Whenever $f_{\succeq}$ is a choice function arising from a partial order $\succeq$, then the context-independent quantifier $\varphi = f_{\succeq} \circ \Img$ is attainable.
\end{proposition}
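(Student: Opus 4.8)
The plan is to verify attainability directly through its equivalent formulation $\varphi(p) \cap \Img(p) \neq \varnothing$ established earlier in Definition~\ref{def-attain}, exploiting the two defining properties of a choice function that arises from a partial order.

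First I would fix an arbitrary context $p \colon X \to R$ and unfold the definition of $\varphi$, writing $\varphi(p) = f_{\succeq}(\Img(p))$. Here $\Img(p)$ is some subset $S \subseteq R$, and since $X$ is non-empty (an agent always has at least one move available) the set $S = \Img(p)$ is non-empty as well.

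Next I would invoke the two properties of $f_{\succeq}$ recorded just before the statement: that $f_{\succeq}(S) \subseteq S$ for every $S$, and that $f_{\succeq}(S) \neq \varnothing$ whenever $S$ is non-empty. The containment property gives $\varphi(p) = f_{\succeq}(\Img(p)) \subseteq \Img(p)$, so the intersection $\varphi(p) \cap \Img(p)$ simply equals $\varphi(p)$. The non-emptiness property, applied to the non-empty set $S = \Img(p)$, gives $\varphi(p) = f_{\succeq}(\Img(p)) \neq \varnothing$. Combining the two facts yields $\varphi(p) \cap \Img(p) = \varphi(p) \neq \varnothing$, which is precisely attainability. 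In fact the containment $\varphi(p) \subseteq \Img(p)$ shows that $\varphi$ meets the strong attainability notion of the preceding remark, so the conclusion is somewhat stronger than asked.

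The only genuine subtlety, and the step I would flag as the main obstacle, is the guarantee that $f_{\succeq}(S)$ is non-empty, i.e.\ that a non-empty set of outcomes always possesses a $\succeq$-maximal element. For a general partial order on an infinite set this can fail (for instance $\NN$ under its usual order has no maximal element), so the proposition implicitly relies on the standing assumption, recorded when choice functions were introduced, that the choice functions $f_{\succeq}$ under consideration do satisfy $f_{\succeq}(S) \neq \varnothing$ on every non-empty $S$. Once that property is taken as given, the argument is otherwise immediate and requires no further computation.
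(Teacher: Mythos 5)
Your proof is correct and follows essentially the same route as the paper's: both rest on the containment property $f_{\succeq}(\Img(p)) \subseteq \Img(p)$, so that any selected outcome lies in the image of $p$ and is therefore attained by some move. You are in fact slightly more careful than the paper's own two-line argument, which leaves the non-emptiness of $\varphi(p)$ implicit, whereas you correctly identify it as the one genuine subtlety (maximal elements need not exist in infinite posets) and resolve it by appealing to the standing property $f_{\succeq}(S) \neq \varnothing$ for non-empty $S$ recorded when choice functions were introduced.
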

\begin{proof} By the definition of $\varphi$ we have that if $r \in \varphi(p)$ then $r$ is a maximal element in $\Img(p)$. Hence we must have an $x \in X$ be such that $p(x) = r$.
\end{proof}

\subsection{Rational Preferences and Utility Functions}

The usual approach to model behaviour in economics is to either postulate a preference relation on the set of alternatives, as discussed above, or to directly assume a utility function \cite{kreps2012microeconomic}. A certain structure is imposed on preference relations mainly for two reasons: either because the additional structure is deemed to be a characteristic of an agent's rationality\footnote{This issue has been intensely debated, see
  \cite{Richter1971,Mas-Colell1995,kreps2012microeconomic}.}, or because one wishes to work with utility functions. It is a standard result that for utility functions to exist, preferences relations have to be \emph{rational} \cite{kreps2012microeconomic}.

Now, rational preferences and utility functions are special cases of the generic construction of a context-independent quantifier that we have outlined in the last section. Rational preferences are special because (i) we impose additional structure on $R$, that is, $\succeq$ is a total preorder and (ii) we focus on one particular $f_\succeq$, that is, $f_\succeq : \mathcal P (R) \to \mathcal P (R)$ defined by
\[ f_\succeq (S) = \{ \succeq \text{-maximal elements of } S \} \] 

A rational preference relation can always be represented by a utility function. Translated into our higher-order approach, the utility function can be characterised as the context $p \colon X \to \RR$ that attaches a real number to each element of the set of choices $X$ with the quantifier defined as
\[
  \varphi (p) = \max_{x \in X} p(x).
\]
Moreover, this quantifier is attained by the selection function
\[
  \varepsilon (p) = \arg\max p
\]
Note the types $\varphi \colon (X \rightarrow \RR) \rightarrow \mathcal P (\RR)$ and $\varepsilon \colon (X \rightarrow \RR) \rightarrow \power{X}$ respectively. And indeed we have that
\[ x \in \varepsilon (p) \implies p(x) \in \varphi (p). \]
Thus, $\max$ and $\arg \max$ operators, which are universally used in the economic literature, are the prototypical examples of a context-independent quantifier and a selection function attaining it. Since utility functions and preference relations can be both represented by quantifiers we can conveniently work with both representations in one model if represented in our framework. 

As long as a rational preference relation is a good representation of a decision problem, there is no obvious reason why not to use utility functions. In fact, often utility functions are seen as more convenient because 
of the availability of standard optimisation techniques. They also provide a succinct description of the agent's goals \cite{Rubinstein2006}. 

However, there are reasons why alternatives to maximisation are important. Firstly, there are situations where utility functions are not applicable. A canonical example is the case of lexicographic preferences. Secondly, in the next section, we will show that higher-order functions also represent decision procedures based on 
other than classical utility functions and preference relations. 
We will discuss why economists should care about having such alternative representations at hand.

\section{Alternatives to Optimisation}
\label{sec:beyond}

We have seen how the higher-order notion of a context-independent quantifiers is able to model choices based on rational preferences (or equivalently on utility maximisation). In this section we show that we can go beyond these cases by allowing for a different structure on the set of outcomes $R$ or by allowing for a different mapping $f \colon \power{R} \to \power{R}$, or by relaxing both.

Firstly, we show that quantifiers include decision procedures that cannot be easily modelled by rational preference relations. Secondly, we will argue that even if it were possible to model a decision problem by rational preferences or utility maximisation, it may be insightful to have alternative representations at hand. 

Why should economists be interested in modelling behaviour differently, if a representation of utility functions is possible? The main methodological question is whether all decisions shall be modelled as being motivated by their consequences. It is clear that in many situations it is possible to redefine the outcome space such that procedural aspects or menu-dependence are encoded in the outcome space. So why should economists be interested in alternatives at all? 

\subsection{Beyond Rational Preferences }

The assumption that the preorder on the outcome space is total, which guarantees the existence of a utility function, is demanding and in fact more demanding than is necessary to rationalise choice behaviour \cite{Richter1971}. When taking the perspective of preferences, from a positive as well as a normative viewpoint, there are good reasons why a rational decision-maker may exhibit indecisiveness, meaning that his preference for some pairs of outcomes may not be defined \cite{Aumann1962}. 
Moreover, consider a situation where the economist or some other agent has only partial information about the preferences of an agent and regards him ``as if'' he has incomplete preferences \cite{Dubra_et_al2004_comple}.
Lastly, $R$ may be a set of alternatives to be chosen by a group of agents. Even if each individual's preferences are complete, the aggregate social welfare ordering does not have to be \cite{Ok2002}.

Sen \cite{Sen1997} discusses ``inescapability or urgency of choice" and situations where the agent has to decide even if he has not totally ranked all alternatives. As a result, no optimal choice in a classical sense can be made. However, Sen \cite{Sen1997} claims that completeness is not necessarily a condition for maximisation which only requires that our chosen alternative is not known to be worse than any other. 

There have been various attempts to change standard formalisms to allow for a utility theory without the need to fulfil the completeness assumption.\footnote{For an important early contribution see \cite{Aumann1962}. More recent contributions include \cite{Ok2002} for utility representations in certain environments and \cite{Dubra_et_al2004_comple} for uncertain environments. See also references in \cite{Ok2002}.} When working with quantifiers and selection functions, the set of outcomes $R$ can have \textit{any} order, or no underlying structure at all. In particular, the preference relation does not have to be total. That is, given any preference relation $\succeq \, \subseteq R \times R$, an agent chooses the best alternatives as outlined above. By that one can consider choices that are not in the scope of utility functions without the need to change the framework. To be clear, the selection function that corresponds to the preference ordering $\succeq$ is
\[ \argmax (p) = \{ x \in X \mid r \succeq p(x) \implies r \not\in \operatorname{Im} (p) \} \]
i.e. a maximal outcome is one which is not known to be worse than any attainable outcome.
It is important to notice that we may still have a total quantifier, even if the preference relation is not total. 
Total quantifiers guarantee the existence of a preferred outcome even in a situation of incomplete preferences. 

\subsection{Beyond Utility Functions}

The utility approach is intimately linked to the assumption that the agent fully optimises. The behavioural economic literature as well as the psychological literature have documented deviations from optimising behaviour \cite{camerer2011behavioral,kahneman2011thinking}. Quantifiers provide a direct way to model such deviations. Here we give a few examples how to represent these cases in our framework.

\begin{example}[Averaging Agent] Consider an agent who prefers the outcome to be as close as possible to the average of all achievable outcomes. Given a decision context $p \colon X \to \RR$, the average amongst the possible outcomes can be calculated as
\[ A_p = \frac{\Sigma_{r \in \Img(p)} r}{|\Img(p)|} \]
Therefore, such agent can be directly modelled via the averaging quantifier $\varphi^A \colon (X \to \RR) \to \power{\RR}$ as
\[ \varphi^A(p) = \{ r \in \Img(p) \;\mid\; \mbox{$|r - A_p|$ is minimal} \} \]
\end{example}

\begin{example}[Ideal-move Agent \cite{hedges13}] Let $r > 0$ be a fixed real number. For a point $v \in \RR^n$ we define the closed ball with centre $v$ and radius $r$ by
\[ B (v; r) = \{ w \in \RR^n \mid d (v, w) \leq r \} \]
where $d$ is the Euclidean distance. Let the set of choices $X$ have a distinguished element $x_0 \in X$. Define the quantifier $\varphi : (X \to \RR^n) \to \mathcal P (\RR^n)$ by
\[ \varphi (p) = B (p(x_0); r) \]
This quantifier is attained by the constant selection function $\varepsilon (p) = \{x_0\}$. 
\end{example}

The last example illustrates Simon's satisficing behaviour. The value $r > 0$ can be considered as a satisficing threshold around outcomes that are close to the outcome of an ideal point. Such an agent is equally satisfied with all outcomes which are close enough to the outcome of the ideal choice. 

The next example represents the second best decision problem discussed in \cite{Kalai2002}.

\begin{example}[Second-best Agent] Consider a simple heuristic of a person ordering wine in a restaurant whereby he always chooses the second most-expensive wine. In terms of quantifiers, let $X$ be the set of wines available in a restaurant, and $p: X \rightarrow \RR$ the price attached to each wine $x_i$ ($i=1, ...,N$) on the menu, so that $r_i = p(x_i)$ denotes the price of wine $x_i$. Given a maximal strict chain $r_n > r_{n-1} > \ldots > r_1$ in $\RR$, let us call $r_{n-1}$ a sub-maximal element. The goal of the agent can be described by the quantifier
\[ \varphi_{>}(p^{X \to \RR}) = \{ \mbox{sub-maximal elements with respect to $>$ within $\Img(p)$} \}. \]
\end{example}

A crucial point of the above examples is the additional degree of freedom of modelling as it is possible to vary the choice operator itself and not being automatically restricted to the max operator and to consider behaviour to be necessarily rationalised by rational preferences.

Obviously, one could rationalise the above choices as the outcome of a maximisation. One could redefine preferences and utility functions such that the outcome of the maximisation is just the second most expensive wine.\footnote{Note, if the prices of the wines represented preferences, a rationalisation of second best choices is not possible (see \cite{Rubinstein2006}).}  However, while equivalent in outcome, the causal model of behaviour is different. The classical approach would force the choice to be rational, whereas in our setting this question remains open. The quantifier formally describes an agent's behaviour. It could be that the choice pattern is a habitual heuristic or it could be the reduced form pattern of some rational decision making in a larger context.

Of course, instead of choosing the second most expensive wine, one could consider alternative heuristics, such as choosing the wine that is closest to the average price of all available wines on the menu, or within a class of wines, etc.

Moreover, one could also combine any heuristic with some arbitrary preferences. Say, the guest is a fan of white wines, and he strictly prefers Chardonnay over Riesling. One could model the agent as first restricting the choices to the wines that are Chardonnay (if available) and then apply his second most expensive decision heuristic to the class of Chardonnay available.

\subsection{Context-Dependent Decision Problems}\label{sec:context_dec}

So far, we have focused only on context-independent quantifiers. 
We have seen that already this restricted class of quantifiers can take us beyond choices motivated by rational preferences. 
Yet, we can do more. As we have discussed in Section 2, we can allow for quantifiers that do not only take the image of $p$ as input but the complete function. Again, we consider several examples.

\begin{example}[Averaging -- revised] Consider again an agent who prefers the outcome to be as close as possible to the average outcome.
But this time we assume that he takes into account the number of possible ways an outcome may be attained. Given a decision context $p \colon X \to \RR$, the weighted average in this case can be calculated as
\[ A_p = \frac{\Sigma_{x \in X} p(x)}{|X|} \]
Such agent can be modelled via the weighted averaging quantifier $\varphi \colon (X \to \RR) \to \power{\RR}$ as
\[ \varphi(p) = \{ r \in \Img(p) \;\mid\; \mbox{$|r - A_p|$ is minimal} \} \]
It easy to check that this is a \emph{context-dependent} quantifier.
\end{example}

Now, consider the example where the set of actions allows an agent to earn some money but some actions are illicit and hence not considered to be a permissible behaviour.  If we care about the actions themselves, we might not necessarily consider the largest sum of money as preferable. This example corresponds to the discussion in (\cite{Sen1997}, section 6) and goes back to Adam Smith (as quoted in \cite{Sen1997}): ``\emph{\ldots conduct that go beyond the pursuit of specified goals has a long tradition. As Adam Smith (1790) had noted, our behavioural choices often reflect `general rules' that `actions' of a particular sort `are to be avoided' (p. 159)}". 

Sen \cite{Sen1997} proposes two methods in order to represent such situations, where the first method is the one that is broadly used in the economic literature:

(1) Incorporate the context or concerns about actions explicitly into an extended outcome space by rewriting the set of outcomes: all outcomes have to be ranked by hand and payoffs are assigned accordingly.  For example, this can be done by attaching the appropriate negative values to all monetary outcomes which are achieved by a criminal activity (note that we do not mean the legal costs of criminal activities which can be easily monetised).  The agent then behaves ``as if" he is maximising this new set.

(2) Restrict the choice options further by taking a permissible subset of actions, reflecting self-imposed constraints or social norms of permissible behaviour, and then seek the maximal outcome from the set of achievable outcomes. 

The optimal actions in the examples in \cite{Sen1997}  are the same if modelled by these two approaches.
There is, however, some critique of these methods. Regarding ``as if" preferences Sen \cite{Sen1997} notes that this new set is ``\emph{\ldots a devised construction and need not have any intuitive plausibility seen as preference. A morally exacting choice constraint can lead to an outcome that the person does not, in any sense, `desire', but which simply mimics the effect of his self-restraining constraint... The `as if' preference works well enough formally, but the sociology of the phenomenon calls for something more than the establishment of formal equivalences.}" 

Such new sets do not represent the original goals of the agents. It may be transparent in a simple example, but not so in situations involving several agents with different goals engaged in interactions. Moreover, we have to derive such a set for each problem separately and cannot represent the original goal as a general rule of conduct, that is combinable with some other concerns, like: ``in any arbitrary situation, first consider the socially permissible actions and then maximise over outcomes which follows from permissible actions". An additional complication arises if the preferences over actions are not just given, but are the product of some more general process, for example a game. Then we have to rewrite the whole ``as if" set each time the social norms change. 

The restriction function seems to represent the choice over the actions in a more intuitive way, but only for some specific examples where some actions are just thrown away from the consideration. It cannot represent more complicated preferences over the action set. In our framework we can consider arbitrary combinations of preferences over actions and outcomes. 
For example, our agent can have a lexicographic ordering of permissible and non-permissible actions or, alternatively, may pursue socially non-permissible activities if they are not too profitable, but may consider them 
if the resulting outcome is high enough. 

\begin{example}[Honest Agents] Consider an agent with a set of possible actions $X$ leading to monetary outcomes $M \subseteq \RR$. Assume some of these actions $I \subset X$ are illegal or dishonest. Hence, the set $L = X \backslash I$ consists of the legal, or honest, actions. In the first instance consider an honest agent who maximises over the outcomes which follows from honest actions. Such a honest agent can be modelled by the quantifier: 
\begin{equation*}
\varphi^h (p) = \{ r \; \mid \; \mbox{$r$ a maximal element in the set $p(L)$} \}
\end{equation*}
where $p(L)$ is the image of $L$ under $p$. Consider, however, a more complicated case where the agent is prepared to consider dishonest or illegal actions when the reward associated with some of these actions is above a threshold $T$. This subtler preference can be directly modelled as
\[
\varphi^d(p) = 
\begin{cases}
	\{ r \; \mid \; \mbox{$r$ is maximal in $\Img(p)$} \} & \text{ if } \max_{x \in I} p(x) > T \\[1mm]
	\varphi^h(p) &\text{ otherwise }
\end{cases}
\]
so that the dishonest agent will behave as the honest one if the maximal reward for a dishonest action is low, but he will consider any action to be acceptable if the gain from a dishonest or illegal action is high enough.
\end{example}

In the next example we introduce an extreme case of an agent who decides on preferred outcomes solely based on the set of moves that lead to that outcome. 

\begin{example}[Safe Agents] Given a decision context $p \colon X \to R$ and an outcome $r \in \Img(p)$, we can calculate the number of different ways $r$ can be attained by
\[ n^p_r = \left| \{ x \in X \mid p(x) = r \} \right|. \]
We say that an outcome $r$ is most unavoidable if $n^p_r$ is maximal over the set of possible outcomes $\Img(p)$. We say that an agent is safe if he prefers most unavoidable outcomes. Such agents are modelled by the quantifier
\[
\phi(p) = \{ r \in \Img(p) \; \mid \; \mbox{ $n^p_r$ maximal} \; \} 
\]
\end{example}

In order to illustrate this quantifier, suppose there are three beaches, and the agent is indifferent between them. The first can be reached by one highway, the second by two highways and the third by three highways. The agent has to choose which highway to take, and the outcome is the beach that the agent goes to. The safe agent decides to visit the beach which can be reached by the most different routes, which is the third, in order to avoid the risk of being stuck in a traffic jam.

\section{Reflexive Agents}
\label{sec:reflexive}

We now discuss the specific situations where the set of actions $X$ and outcomes $R$ are the same $X = R$. In this case elements of the type
\[ 
(X \to X) \to \power{X}
\]
can be either viewed as quantifiers or selection functions. 
Agents of this type are common in elections:

\begin{example} [Voting Agent] \label{keynesian-ex1} Consider three judges $J=\{J_1, J_2, J_3\}$ voting for two contestants $X = \{A, B \}$. The winner is determined by the simple majority rule of type $\operatorname{maj}: X\times X\times X \rightarrow X$. The set $X$ denotes both the set of choices and the set of possible outcomes of the contest. We first assume that the judges rank the contestants according to a preference ordering. For example, suppose judges 1 and 2 prefer $A$ and judge 3 prefers $B$. Consider the decision problem of the first judge. He has an ordering on the set $X$, namely $A \succeq_1 B$, and his goal is to maximise the outcome with respect to this ordering. Hence, he is modelled via the quantifier: 
\[
\varphi_1^J(p) = \max_{x \in (X, \succeq_1)} p(x)
\]
\end{example}

The set $X$ is equipped with a partial order and the $\max$ operator $(X \to X) \to \power{X}$ describes the agent. 

Another very interesting example of an agent with an important economic interpretation, is the fixpoint operator, that we have already 
mentioned in Section \ref{sec:context-dependence}.

\begin{example}[Keynesian Agent] \label{keynesian-ex2} Consider the same example as in the last example but now assume that judge 1 has different preferences: he prefers to support the winner of the contest. He is only interested in voting for the winner of the contest and he has no preferences for the contestants per se.  The selection function of such a Keynesian agent can be described by a fixpoint operator as
\[
\varepsilon_1^K(p) = \fix(p) = \{ x \in X \; \mid \; p(x) = x \}.
\]
Interestingly, such an agent is best described by a selection function, rather than via the corresponding quantifier
\[
\varphi_1^K(p) = \{ p(x) \; \mid \; p(x) = x \}.
\]
\end{example}

We note that it is perfectly possible to model such a Keynesian agent via standard utility functions, attaching say utility 1 to good outcomes and 0 to the bad ones, so that the judges maximise over the set of monetary payoffs. In this process of attaching utilities to the decision, however, one has to compute the outcome of the votes, 
then to check for the second and the third judges whether their vote is in line with the outcome, and finally to attach the utilities. In some sense, the economist takes the whole decision process, solves the problem, identifies the good outcomes according to the natural language description of the problem, and then lets the agent to choose between 1 and 0. 

On the other hand, if we use the fixpoint operator in order to represent the goal, we equip the individual agent himself with the problem solving ability 
that we as the modeller otherwise use in order to compute the utility such that the utility maximising agent behaves as if he were a Keynesian agent.
We have been therefore tempted to call such fixpoint agents \emph{reflexive agents} 
as they do inside the model what the economist is doing outside the model.
These fixpoint agents with their computational power resemble a construction that is at the core of the 
Lucas critique.\footnote{Sargent \cite{Sargent1993} describes the need for a similarity of the economist and the economically reasoning agents 
in the economists' models as follows:
{\it  ``[t]he idea of rational expectations is ... said to embody the idea that economists 
and the agents they are modeling should be placed on the  equal footing: 
the agents in the model should be able to forecast and profit-maximize 
and utility-maximize as well as the economist - or should we say the econometrician - who constructed the model."}}

As briefly discussed above, most functions $p \colon X \to X$ do not have a fixpoint and the fixpoint operator will often give the empty set. For the purposes of modelling a particular situation we might want to totalise the fixpoint operator in different ways and describe what an agent might do in case that no fixpoint exists. 
The fixpoint goals are far more interesting when we consider a game with several agents with different concerns, for instance some with usual preferences 
and some with fixpoint goals.
We analyse such a game in detail in our companion paper on higher-order games \cite{Hedges_et_al_2015_games}.

Let us conclude with another example of a reflexive agent.

\begin{example}[Coordinating Agent] \label{keynesian-ex3} Consider two players, $\{0,1\}$, who want to coordinate, for instance, 
about the restaurant where to meet for lunch. The set of actions $X_0 = X_1 =\{A, B\}$ denotes the different restaurants at choice.
The set of outcomes $R = X_0 \times X_1$ denotes the two restaurants where the agents might end up. 
The fact that these two agents want to meet in the same restaurant ca be directly described by another sort of fixpoint operator:
\[
\varepsilon_i(p) = \{ x \in X_i \; \mid \; x = (\pi_{1 - i} \circ p)(x) \}
\]
where $\pi_i \colon X_0 \times X_1 \to X_i$ are the projection functions. The preferred move of agent $i$ is the one which leads him to the same place as the other agent $1 - i$.
\end{example}

These two examples above show that the overall goal of the Keynesian and the coordinating agent are very similar, and can be captured by some variants of fixpoint operators. Even though it is possible to use utility functions in order to model these concerns in the particular examples, it is not so obvious that this commonality can be made explicit when modelling with utility functions. In our more abstract formalisation via higher-order functions, it is possible to detect patterns across problems that are hard to find when one only looks at the compiled level of utility maximisation. 

\section{Conclusions}
\label{sec:conclusions}

The utility maximisation framework of standard decision theory is an intuitive representation of rational agents and the yardstick approach in economics. 
The rationale for adhering to this approach is that modelling tools, 
such as functional analysis as well as solution methods, such as optimisation via Lagrangian methods, are readily available.
However, theory, reality and experiments suggest that we need to go beyond the modelling strategy of representing any behaviour as if it were the result of utility maximisation. Moreover, it is not straightforward how to implement utility maximisation in computers or take into account computability issues for example for the real numbers of utility maximisation.

Our higher-order approach offers a path to resolve both of these issues: we can generalise utility maximisation.
At the same time higher-order functions provide a way to implement our games in computers. 
Higher-order functions generalise the $\max$ and $\argmax$ operators, and as the foundation of computability theory, programming language and compiler design 
and computer science in general, they form a way to implement models in computers and analyse decision models with the tools to analyse software.

Another, very important issue of the higher-order functions is that they equip our decision framework with the feature of programming languages being compositional.
In this paper we have seen how various decision goals are composable, in our companion paper \cite{Hedges_et_al_2015_games} we show that this extends to games as well.
Games are composable from decisions and furthermore algebraically composable into any complicated game.

\bibliographystyle{plain}
\bibliography{../references}



\end{document}